
\documentclass [12pt]{article} 
\usepackage{amssymb,amsmath,amsthm}
\usepackage{graphicx}
\usepackage[utf8]{inputenc}
\usepackage[T1]{fontenc}

\newtheorem{thm}{Theorem}[section]
\newtheorem{lemma}[thm]{Lemma}
\newtheorem{prop}[thm]{Proposition}
\newtheorem{cor}[thm]{Corollary}

\theoremstyle{remark}

\theoremstyle{remark}
\newtheorem*{note}{Remark}

\theoremstyle{definition}
\newtheorem{df}[thm]{Definition}
\newtheorem{ex}[thm]{Example}

\DeclareMathOperator{\tr}{Tr}

\newcommand{\N}{\mathbb{N}}

\newcommand{\R}{\mathbb{R}}

\newcommand{\il}{\langle}
\newcommand{\ir}{\rangle}

\begin{document}

\title{{\bf On orthogonal projections on the space of consistent pairwise comparisons matrices}} 

\author{Waldemar W. Koczkodaj
\thanks{Computer Science, Laurentian University, Sudbury, Ontario P3E 2C6, Canada, wkoczkodaj@cs.laurentian.ca}
\and
Ryszard Smarzewski \thanks{Institute of Mathematics and Cryptology
Cybernetics Faculty, Military University of Technology
Kaliskiego 2, 00-908 Warsaw, Poland, ryszard.smazewski@gmail.com}
\and
Jacek Szybowski \thanks{AGH University of Science and Technology, Faculty of Applied Mathematics, al. Mickiewicza 30, 30-059 Kraków, Poland, szybowsk@agh.edu.pl}
}

\maketitle

\begin{abstract}
In this study, the orthogonalization process for different inner products is applied to pairwise comparisons.
Properties of consistent approximations of a given inconsistent pairwise comparisons matrix are examined. A method of a derivation of a priority vector induced by a pairwise comparison matrix for a given inner product has been introduced.

The mathematical elegance of orthogonalization and its universal use in most applied sciences has been the motivating factor for this study. However, the finding of this study that approximations depend on the inner product assumed, is of considerable importance. 
\end{abstract}

\noindent Keywords:  pairwise comparisons, inconsistency, approximation, inner product, orthogonal basis. \\
\maketitle



\renewcommand{\thefootnote}{\arabic{footnote}} \setcounter{footnote}{0}

\section{Introduction}


The growing number of various orthogonalization approaches in \cite{FI2, FI4, FI1, FI3} supports the importance of orthogonalization in various computer science applications. Pairwise comparisons allow us to express assessments of many entities (especially, of the subjective nature) into one value for the use in the decision making process.  
Pairwise comparisons have been used since the late years in the 13th century by Llull for conducting the better election process (as stipulated in \cite{Llull_election}). 
However, the ineffability of pairwise comparisons comes from decision making which must have been made by our ancestors during the Stone Age.
Two stones must have been compared to decide which of them fit for the purpose. It could be for a hatchet, a gift, or a decoration. 


Pairwise comparisons matrices can be transformed by a logarithmic mapping into a linear space and the set of consistent matrices into its subspace. The structure of a Hilbert space is obtained by using an inner product.
Such a space is complete with respect to the norm corresponding to the inner product.
In such a space, we may use orthogonal projections as a tool to produce a consistent approximation of a given pairwise comparison matrix.
%
\subsection*{Structure of the paper}
A gentle introduction to pairwise comparisons is provided in Section~\ref{PCs}. Section~\ref{approx} discusses the problem of approximation of an inconsistent PC matrix by a consistent PC matrix
using Frobenius inner product on the space of matrices. Other inner products are discussed in Section~\ref{products}. In Section~\ref{crit} the dependence of an optimal priority vector on the choice of an inner product on the space of pairwise comparison matrices has been proved. The Conclusions are self explanatory.

\section{Pairwise comparisons matrices}
\label{PCs}
In this subsection, we define a pairwise comparisons matrix (for short, PC matrix) and introduce some related notions.
Pairwise comparisons are traditionally stored in a PC matrix. It is a square $n\times n$ matrix $M=[m_{ij}]$ with real positive elements $m_{ij}>0$ for every $i,j=1,\ldots,n,$ where $m_{ij}$ represents a relative preference of an entity $E_i$ over $E_j$ as a ratio. The entity could be an object, attribute of it, abstract concept, or a stimulus. For most abstract entities, we do not have a well established measure such as a meter or kilogram. ``Software safety'' or ``environmental friendliness'' are examples of such entities or attributes used in pairwise comparisons.
\par
When we use a linguistic expression containing "how many times", we process ratios.
The linguistic expression "by how much", "by how much percent" (or similar) gives us a relative difference.
Ratios often express subjective preferences of two entities, however, it does not imply that they can be obtained only by division.
In fact, equalizing the ratios with the division (e.g., $E_i/E_j$), for pairwise comparisons, is in general unacceptable.
It is only acceptable when applied to entities with the existing units of measure (e.g., distance).
However, when entities are subjective (e.g., reliability and robustness commonly used in a software development process
as product attributes), the division operation has no mathematical meaning although we can still consider which of them
is more (or less) important than the other for a given project.
The use of the symbol "/" is in the context of "related to" (not the division of two numbers).
Problems with some popular customization of PCs have been addressed in \cite{collab2}.
We decided not to address them here.   
\par
A PC matrix $M$ is called {\em reciprocal} if $m_{ij} = \frac{1}{m_{ji}}$ for every $i,j=1, \ldots ,n$.
In such case, $m_{ii}=1$ for every $i=1, \ldots ,n $.
\par
We can assume that the PC matrix has positive real entries and is reciprocal without the loss of generality
since a non reciprocal PC matrix can be made reciprocal by the theory presented in \cite{KO1999}.
The conversion is done by replacing $a_{ij}$ and $a_{ji}$ with geometric means of $a_{ij}$ and $a_{ji}$ ($\sqrt{a_{ij}a_{ji}}$).
The reciprocal value is $\frac{1}{\sqrt{a_{ij}a_{ji}}}$.
\par
Thus a PC matrix $M$ is the $n\times n$-matrix of the form:
 \begin{displaymath}
M = \begin{bmatrix}
1 & m_{12} & \cdots & m_{1n} \\
\frac{1}{m_{12}} & 1 & \cdots & m_{2n} \\
\vdots & \vdots & \vdots & \vdots \\
\frac{1}{m_{1n}} & \frac{1}{m_{2n}} & \cdots & 1
\end{bmatrix}.
\end{displaymath}
Sometimes, we write that $M\in\mathrm{PC}_n$ in order to indicate the size of a given PC matrix.

\subsection{The Geometric Means Method}
The main goal to use a pairwise comparison matrix is to obtain the so called priority vector. The coordinates of this vector correspond to the weights of alternatives. If we know the priority vector, we can set alternatives in order from the best to the worst one.

In the Geometric Means Method (GMM) introduced in \cite{CW1985} the coordinates of the vector are calculated as the geometric means of the elements in rows of the matrix:
\begin{equation}\label{GMM}
v_i=\sqrt[n]{\prod_{j=1}^n a_{ij}}.
\end{equation}
The above vector is the solution of the Logarithmic Least Square Method.

\subsection{Triads, transitivity, and submatrices of a PC matrix}
One of the fundamental problems in pairwise comparisons is the inconsistency.
It takes place when we provide, for any reason, all (hence supernumerary) comparisons of
$n$ entities which is $n^2$ or $\frac{n(n-1)}{2}$ if the reciprocity is assumed and used to reduce the number of entered comparisons.
The sufficient number of comparisons is $n-1$, as stipulated in \cite{KS2015}, but this number is based on some
arbitrary selection criteria of the minimal set of entities to compare. In practice, we have a tendency to make all
$n(n-1)/2$ comparisons (when reciprocity is assumed which is expressed by $m_{ij}=\frac{1}{m_{ji}}$ property also not
always without its problem). Surprisingly, the equality $x/y = \frac{1}{y/x}$ does not take place even if both
$x \neq 0$ and $y \neq 0$. For example, the blind wine testing may result in claiming that $x$ is better than $y$ and
$y$ is better than $x$ or even that $x$ is better than $x$ which is placed on the main diagonal in a PC matrix $M$,
expressing all pairwise comparisons in a form of a matrix.



The basic concept of inconsistency may be illustrated as follows. If an alternative $A$ is three times better than $B$, and $B$ is twice better than $C$, than $A$ should not be evaluated as five times better than C. Unfortunately, it does not imply that $A$ to $C$ should be $3\cdot 2$ hence 6, as the common sense may dictate, since all three assessments (3, 5, and 2) may be inaccurate and we do not know which one of them is or not incorrect. Inconsistency is sometimes mistakenly taken
for the approximation error but it is incorrect.
For example, triad $T=(3,5,2)$ can be approximated by $T_{\mathrm{approx}}(1,1,1)$ with 0 inconsistency but we can see that such approximation is far from optimal by any standard.
So, the inconsistency can be 0 yet the approximation error can be different than 0 and of arbitrarily large value.

\subsection{Multiplicative variant of pairwise comparisons}
\begin{df}
Given $n\in\N$, we define $$\mathcal{T}(n)=\{(i,j,k)\in\{1,\ldots,n\}:\, i<j<k \}$$
as the set of all PC matrix indexes of all permissible triads in the upper triangle.
\end{df}
\begin{df}\label{dftransitive}
A PC matrix $M=[m_{ij}]$ is called {\em consistent} (or {\em transitive}) if, for every $(i,j,k) \in \mathcal{T}(n):$
\begin{equation}\label{cc}
m_{ik}m_{kj}=m_{ij}.
\end{equation}
\end{df}
Equation~(\ref{cc}) was proposed a long time ago (in 1930s) and it is known as a "consistency condition".
Every consistent PC matrix is reciprocal, however, the converse is false in general.
If the consistency condition does not hold, the PC matrix is inconsistent (or intransitive).
In several studies, conducted between 1940 and 1961 (\cite{GS1958,H1953,KB1939,S1961})
the inconsistency in pairwise comparisons was defined and examined.
\par
Inconsistency in pairwise comparisons occurs due to superfluous input data. As demonstrated in \cite{KS2015}, only $n-1$ pairwise comparisons are really needed to create the entire PC matrix for $n$ entities, while the upper triangle has $n(n-1)/2$ comparisons. Inconsistencies are not necessarily "wrong" as they can be used to improve the data acquisition. However, there is a real necessity to have a "measure" for it.


\noindent 

\begin{lemma}\label{l23} If a $PC$ matrix $M=\ {\left[m_{ij}\right]}^n_{i,j=1}$ is consistent, then 

                                $$m_{ij}=\frac{{\omega }_i}{{\omega }_j}  \mbox{ for all } i,j=1,2,\dots ,n$$

\noindent where ${\omega }_1>0\ \ \mathrm{is\ arbitrary\ and}\ \ {\omega }_j=\frac{{\omega }_1}{m_{1j}}$ for every $j=2,3,\dots ,n$.

\end{lemma}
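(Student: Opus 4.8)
The plan is to show that the numbers $\omega_j$ defined in the statement reproduce every entry of $M$ via the ratio formula $m_{ij}=\omega_i/\omega_j$. I would begin by fixing an arbitrary $\omega_1>0$ and setting $\omega_j=\omega_1/m_{1j}$ for $j=2,\dots,n$, exactly as prescribed. Since $M$ is consistent it is in particular reciprocal, so $m_{11}=1$ and the formula $\omega_j=\omega_1/m_{1j}$ may be read uniformly for all $j=1,\dots,n$ (with $m_{11}=1$ giving $\omega_1=\omega_1$). Because each $m_{1j}>0$, every $\omega_j$ is a well-defined positive real, which takes care of the positivity assertion.

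The heart of the argument is to verify $m_{ij}=\omega_i/\omega_j$ for all pairs $(i,j)$. Substituting the definitions gives
\begin{equation*}
\frac{\omega_i}{\omega_j}=\frac{\omega_1/m_{1i}}{\omega_1/m_{1j}}=\frac{m_{1j}}{m_{1i}},
\end{equation*}
so the claim reduces to proving the identity $m_{ij}=m_{1j}/m_{1i}$, equivalently $m_{1i}\,m_{ij}=m_{1j}$. This is where the consistency condition (\ref{cc}) enters. I would apply it to the triad built from indices $1,i,j$; up to relabeling and the use of reciprocity to reorder the indices into the required $i<j<k$ pattern, the condition $m_{ik}m_{kj}=m_{ij}$ yields precisely $m_{1i}m_{ij}=m_{1j}$. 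The main obstacle is therefore bookkeeping rather than mathematics: Definition~\ref{dftransitive} states the consistency condition only for ordered triads $(i,j,k)\in\mathcal{T}(n)$ with $i<j<k$, so I must handle the cases $i=1$, $j=1$, $i=j$, and general orderings of $1,i,j$ separately, invoking reciprocity $m_{pq}=1/m_{qp}$ to translate each configuration back to an admissible ordered triad.

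To organize this cleanly I would first record a short auxiliary observation: the consistency condition, combined with reciprocity, holds in the symmetric form $m_{pk}m_{kq}=m_{pq}$ for \emph{all} triples of indices, not only increasing ones. Granting that, the identity $m_{1i}m_{ij}=m_{1j}$ follows by choosing $p=1$, $k=i$, $q=j$. I expect the verification of this symmetric form to be the one place requiring genuine care, since one must check that every permutation of a triple, after applying reciprocity, is consistent with the single ordered instance guaranteed by the definition. Once the symmetric consistency is in hand, the substitution above closes the proof immediately, and the diagonal and boundary cases ($i=j$ giving $m_{ii}=1=\omega_i/\omega_i$, and $i=1$ or $j=1$ recovering the defining relation) are verified directly.
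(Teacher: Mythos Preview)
Your argument is correct and close in spirit to the paper's proof: both reduce the claim to a single application of the consistency relation together with the defining formula for the $\omega_j$. The only structural difference is that the paper proceeds by an (implicit) induction on $i$, using the identity $m_{ij}=m_{i-1,j}/m_{i-1,i}$ to step from row $i-1$ to row $i$, whereas you go directly through row $1$ via $m_{1i}m_{ij}=m_{1j}$. Your route is slightly more economical since it avoids the induction, and your explicit remark that the ordered-triple form of Definition~\ref{dftransitive} must be upgraded (via reciprocity) to the symmetric form $m_{pk}m_{kq}=m_{pq}$ for arbitrary indices is a point the paper's proof uses tacitly as well.
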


\begin{proof} By the definition of ${\omega }_j$ and consistency of $M$ one gets

                    $$m_{1j}=\frac{{\omega }_1}{{{\omega }_1}/{m_{1j}}}=\frac{{\omega }_1}{{\omega }_j}$$ and  $$m_{ij}=\frac{m_{i-1,j}}{m_{i-1,i}}=\frac{{{\omega }_{i-1}}/{{\omega }_j}}{{{\omega }_{i-1}}/{{\omega }_i}}=\frac{{\omega }_i}{{\omega }_j}.$$

\noindent whenever $1<i\le n.$   
\end{proof} 

\noindent 

It is easy to observe that the set ${\mathcal{M}}_n=\left({\mathcal{M}}_n, \cdot \right)$ of all consistent $PC$ 
matrices $M$ is a multiplicative subgroup of the group of all $PC$ $n\times n$-matrices endowed with the coordinate-wise multiplication $A \cdot B=\left[a_{ij}b_{ij}\right],$ where $A=\left[a_{ij}\right]$ and  $B=\left[b_{ij}\right]$. Its representation in ${\mathbb{R}}^n$ consists of all priority vectors $\upsilon \left(M\right)=$\linebreak $\left({\omega }_1,{\omega }_2,\dots ,{\omega }_n\right),$ defined uniquely as in Lemma \ref{l23}, up to a multiplicative constant ${\omega }_1>0.$ In the following we use priority vectors normalized by the condition  ${\omega }_1=1,$ unless otherwise stated.  

\noindent 

\noindent \textbf{}

\subsection{Additive variant of pairwise comparisons}
Instead of a PC matrix $M=[m_{ij}]$ with $m_{ij}\in\R_+^*,$ the set of positive real numbers considered with multiplication, we can transform entries of $M$ by a logarithmic function and get a matrix $A=\left[a_{ij}\right]=\left[{\mathrm{log}\ m_{ij}}\right].$ Since a $PC$  matrix $M$ is reciprocal, it follows that it is anti-symmetric, i.e.

$$a_{ij}=-a_{ji}\mbox{ for every }i,j=1,2,\dots ,n.$$          

\noindent Moreover, if $M$ is consistent then $A={\mathrm{log}\ M}$ satisfies  the condition of additive  consistency:

       $$a_{ik}+a_{kj}=a_{ij}\mbox{ for every }\left(i,j,k\right)\in \mathcal{T}\left(n\right),$$   

\noindent which yields the following well-known representation.    

\noindent                                                      

\begin{lemma}\label{l24} If an  anti-symmetric matrix $A=\ {\left[a_{ij}\right]}^n_{i,j=1}$ is additively consistent, then 

                                $$a_{ij}={\sigma }_i-{\sigma }_j\mbox{ for all }i,j=1,2,\dots ,n,$$

\noindent where  ${\sigma }_1$ is arbitrary and ${\sigma }_j={\sigma }_1-$ $a_{1j}$ for every $j=2,3,\dots ,n$. 

\end{lemma}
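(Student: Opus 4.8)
The plan is to mirror the proof of Lemma~\ref{l23} verbatim under the logarithm, replacing products and quotients by sums and differences, since additive consistency is precisely the image of the multiplicative consistency condition~\eqref{cc} under $\log$. First I would settle the base row $i=1$: the very definition $\sigma_j=\sigma_1-a_{1j}$ rearranges to $a_{1j}=\sigma_1-\sigma_j$, so the asserted identity holds for $i=1$ and every $j$. The diagonal is immediate, since anti-symmetry forces $a_{ii}=-a_{ii}$, hence $a_{ii}=0=\sigma_i-\sigma_i$.

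For the rows $1<i\le n$ I would argue by induction on $i$, exactly as the quotient telescoping in Lemma~\ref{l23}. Additive consistency, written in the cocycle form $a_{i-1,i}+a_{ij}=a_{i-1,j}$, rearranges to the recursion
$$a_{ij}=a_{i-1,j}-a_{i-1,i}.$$
Feeding in the inductive hypothesis $a_{i-1,j}=\sigma_{i-1}-\sigma_j$ together with its instance $a_{i-1,i}=\sigma_{i-1}-\sigma_i$ then gives
$$a_{ij}=(\sigma_{i-1}-\sigma_j)-(\sigma_{i-1}-\sigma_i)=\sigma_i-\sigma_j,$$
which closes the induction and establishes the claim for all $i,j$.

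The single point demanding care is that additive consistency is stated only for ordered triples $(i,j,k)\in\mathcal{T}(n)$ with $i<j<k$, whereas the recursion above refers to the indices $i-1,i,j$ in whichever order they occur, in particular when $j\le i$. The main obstacle is thus bookkeeping rather than mathematics: before running the induction I would promote the ordered condition to the fully symmetric cocycle identity $a_{pq}+a_{qr}=a_{pr}$, valid for all $p,q,r$, by combining the stated condition with anti-symmetry $a_{pq}=-a_{qp}$. With that extension in hand one may even bypass the induction altogether, taking the pivot index equal to $1$ to get $a_{ij}=a_{i1}+a_{1j}=a_{1j}-a_{1i}=(\sigma_1-\sigma_j)-(\sigma_1-\sigma_i)=\sigma_i-\sigma_j$ in a single line; I would keep the telescoping version in the text to parallel Lemma~\ref{l23} and relegate this shortcut to a remark.
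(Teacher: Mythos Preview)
Your proposal is correct. The paper actually gives no proof for Lemma~\ref{l24}; it is stated as a well-known representation and left unproved, being the additive counterpart of Lemma~\ref{l23}. Your argument faithfully translates the paper's proof of Lemma~\ref{l23} into the additive setting, including the telescoping recursion $a_{ij}=a_{i-1,j}-a_{i-1,i}$, and you correctly flag the one genuine subtlety (extending the ordered-triple condition to all index triples via anti-symmetry), so this is exactly the intended approach.
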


 In view of this representation, the set ${\mathcal{A}}_n=\left({\mathcal{A}}_n,+\right)$ of all additively consistent matrices  is an additive subgroup of all $n\times n$- matrices, whenever it is endowed with the coordinatewise matrix addition  $A+B=\left[a_{ij}{+b}_{ij}\right]\ $ of $A=\left[a_{ij}\right]$ and  $B=\left[b_{ij}\right]$. It is a one-to-one image of the  multiplicative group ${\mathcal{M}}_n=\left({\mathcal{M}}_n,\cdot \right)$  by the group isomorphism $A={\mathrm{log}\ M}=[{\mathrm{log}\ m_{ij}]\ }$. The inverse group isomorphism is clearly given by the formula $M={\mathrm{exp}\ A}=\left[{\mathrm{exp}\ a_{ij}}\right].$ Moreover, the additive priority vector $\upsilon \left(A\right)=\left({\sigma }_1,{\sigma }_2,\dots ,{\sigma }_n\right)$ of $A$ satisfies $\upsilon \left(A\right)={\mathrm{log} \ \upsilon \left(M\right),\ }$ where ${\sigma }_1={\mathrm{log} {\omega }_1\ }$ is supposed to be  arbitrary  additive constant. In particular, it is said to be normalized if ${\sigma }_1=0.$ Here and in the following matrix functions ${\mathrm{log}\ M=\left[{\mathrm{log}\ m_{ij}}\right]\ }$ and ${\mathrm{exp}\ A\ }=[{\mathrm{exp}\ a_{ij}]\ }$ are always understood in the coordinate- wise sense.


\section{Approximation by projections}
\label{approx}

Numerous heuristics have been proposed for approximations of inconsistent pairwise comparisons matrices by consistent pairwise comparisons matrices. Geometric means (GM) of rows is regarded as dominant. Some mathematical evidence, to support GM as the method of choice, was also provided in \cite{CW2D016}.
\cite{KS2016} shows that orthogonal projections have a limit which is GM (to a constant). \cite{KKSX} demonstrates that the inconsistency reduction algorithm based on the orthogonal projections converges very quickly for practical applications. The proof of inconsistency convergence was outlined in \cite{HK1996} and finalized in \cite{KS2010}. Axiomatization of inconsistency still remains elusive. Its recent mutation in \cite{KU17} has a deficiency (the  monotonicity axiom incorrectly defined).
 
\subsection{Space of consistent matrices}
\noindent Let $\mathbb{K}$= $\mathbb{R}\ \mathrm{or}\mathrm{\ }\mathbb{C}\mathrm{.}$ Let$\ M\left(n,\mathbb{K}\right)$ be the set of all$\ n \times n$-matrices with entries    from the field$\ \mathbb{K},$ and let  ${\mathcal{C}=\mathcal{M}}_n\subset M\left(n,\mathbb{K}\right)$ be the set of all consistent $n\times n$-matrices with entries from the field$\ \mathbb{K}$. We consider $M\left(n,\mathbb{K}\right)$ as a $\mathbb{K}-$linear space with addition of matrices and multiplication by numbers from the field $\mathbb{K},$  clearly  ${dim}_{\mathbb{K}}\ M\left(n,\mathbb{K}\right)$=\textbf{ }$n^2$\textbf{ }and\textbf{ }the unit\textbf{ }matrices 
\[E_{ij}={\left[e^{i,j}_{rs}\right]}^n_{r,s=1},\ \ i,j=1,2,\dots ,n,\]

\noindent form a basis in $M\left(n,\mathbb{K}\right),$ where  $e^{i,j}_{rs}$ is equal to 1, if $r=i$ and $s=j,$ and   otherwise 0.  

In the linear space $M\left(n,\mathbb{K}\right)\ $ one can define the Frobenius inner product as follows. For all $A=\left[a_{ij}\right],B=\left[b_{ij}\right]\in M\left(n,\mathbb{K}\right),$ \textbf{}

\begin{equation*}
\langle A,B\rangle_{\mathrm{F}}=\sum_{i=1}^n\sum_{j=1}^n a_{ij}\bar{b}_{ij}.
\end{equation*}
In this Section we recall results from \cite{KO1997}.
\begin{thm}
The set $\mathcal C$ is a linear subspace of $\mathrm{M}(n,\mathbb K).$
\end{thm}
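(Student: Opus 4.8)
The plan is to recast the statement in the additive (linearized) setting, where ``consistent'' means \emph{additively} consistent: a matrix $A=[a_{ij}]\in\mathcal C$ is anti-symmetric, $a_{ij}=-a_{ji}$, and satisfies $a_{ik}+a_{kj}=a_{ij}$ for every $(i,j,k)\in\mathcal T(n)$. This is the correct reading here, since the entries are drawn from a field $\K=\R$ or $\C$ rather than from the positive reals; these matrices are exactly the images $A=\log M$ of multiplicatively consistent PC matrices under the isomorphism recorded after Lemma~\ref{l24}. The whole task then reduces to checking that $\mathcal C$ is closed under the vector-space operations of $M(n,\K)$.

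First I would observe that both defining relations are \emph{homogeneous linear} equations in the $n^2$ entries: anti-symmetry reads $a_{ij}+a_{ji}=0$, and additive consistency reads $a_{ij}-a_{ik}-a_{kj}=0$. Hence $\mathcal C$ is precisely the common solution set of a finite family of homogeneous linear equations, and any such set is automatically a linear subspace. Concretely, if $A,B\in\mathcal C$ and $\lambda,\mu\in\K$, then for each admissible triple $(\lambda a_{ik}+\mu b_{ik})+(\lambda a_{kj}+\mu b_{kj})=\lambda(a_{ik}+a_{kj})+\mu(b_{ik}+b_{kj})=\lambda a_{ij}+\mu b_{ij}$, and the anti-symmetry relation is preserved in the same way, so $\lambda A+\mu B\in\mathcal C$; the zero matrix trivially lies in $\mathcal C$, so $\mathcal C$ is nonempty.

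An equally short alternative, which I would mention because it makes the structure transparent, uses Lemma~\ref{l24} directly: define $L:\K^n\to M(n,\K)$ by $L(\sigma)=[\sigma_i-\sigma_j]_{i,j=1}^n$. This map is manifestly $\K$-linear in $\sigma=(\sigma_1,\dots,\sigma_n)$, and Lemma~\ref{l24} says precisely that its image equals $\mathcal C$. Since the image of a linear map is always a linear subspace, we again conclude that $\mathcal C$ is a linear subspace of $M(n,\K)$.

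There is no genuine obstacle in this statement; it follows immediately from the defining equations being homogeneous and linear. The only point requiring care is conceptual rather than computational: one must recognize that the relevant notion of consistency inside the linear space $M(n,\K)$ is the additive one inherited through the logarithmic isomorphism, since the multiplicative condition $m_{ik}m_{kj}=m_{ij}$ is nonlinear and its solution set is not closed under ordinary matrix addition or under multiplication by negative scalars.
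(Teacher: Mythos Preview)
Your proof is correct and follows essentially the same approach as the paper's: verify directly that the additive consistency condition $a_{ik}+a_{kj}=a_{ij}$ is preserved under addition and scalar multiplication. You are in fact more thorough than the paper (you also check anti-symmetry and nonemptiness, and offer the image-of-a-linear-map alternative via Lemma~\ref{l24}), but the core argument is identical.
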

\begin{proof}
Let $A=[a_{ij}],B=[b_{ij}]\in\mathcal C,$ that is
$$a_{ik}+a_{kj}=a_{ij}\text{ and }b_{ik}+b_{kj}=b_{ij}.$$
Let $C=[c_{ij}]=A+B,$ then
$$c_{ik}+c_{kj}=(a_{ik}+b_{ik})+(a_{kj}+b_{kj})=(a_{ik}+a_{kj})+(b_{ik}+b_{kj})=a_{ij}+b_{ij}=c_{ij}.$$
Hence, $C\in\mathcal C.$
\par
Let $\alpha\in\mathbb K$ and $A\in\mathcal C.$ It is clear that $\alpha A\in\mathcal C.$
\end{proof}
\begin{thm}\label{dimension}
The subspace $\mathcal C\subset\mathrm{M}(n,\mathbb K)$ has dimension $n-1$ over $\mathbb K.$
\end{thm}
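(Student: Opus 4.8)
The plan is to realize $\mathcal{C}$ as the image of a single linear map out of $\mathbb{K}^n$ and then read off its dimension via the rank--nullity theorem. Concretely, I would define $\Phi : \mathbb{K}^n \to M(n,\mathbb{K})$ by $\Phi(\sigma) = [\sigma_i - \sigma_j]_{i,j=1}^n$ for $\sigma = (\sigma_1, \ldots, \sigma_n)$. Linearity of $\Phi$ is immediate from this formula, so the whole argument reduces to identifying the image and the kernel of $\Phi$.

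First I would verify that $\operatorname{im}\Phi \subseteq \mathcal{C}$. This is a one-line check: each matrix $[\sigma_i - \sigma_j]$ is anti-symmetric, and for any triad $(i,j,k) \in \mathcal{T}(n)$ we have $(\sigma_i - \sigma_k) + (\sigma_k - \sigma_j) = \sigma_i - \sigma_j$, which is precisely the additive consistency condition. For the reverse inclusion $\mathcal{C} \subseteq \operatorname{im}\Phi$, I would invoke Lemma \ref{l24}: every additively consistent anti-symmetric matrix $A$ has the form $a_{ij} = \sigma_i - \sigma_j$ for a suitable vector $\sigma$, so $A = \Phi(\sigma)$. Together these give $\operatorname{im}\Phi = \mathcal{C}$.

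Next I would compute the kernel. The equation $\Phi(\sigma) = 0$ forces $\sigma_i - \sigma_j = 0$ for all $i,j$, i.e. all coordinates of $\sigma$ coincide, so $\ker\Phi$ is the line spanned by $(1,1,\ldots,1)$ and $\dim_{\mathbb{K}} \ker\Phi = 1$. Applying rank--nullity to $\Phi$ then yields $\dim_{\mathbb{K}} \mathcal{C} = \dim_{\mathbb{K}} \operatorname{im}\Phi = n - \dim_{\mathbb{K}} \ker\Phi = n - 1$.

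This argument is essentially routine and has no genuinely hard step; the only point deserving care is the surjectivity inclusion $\mathcal{C} \subseteq \operatorname{im}\Phi$, which is not evident from the definition of consistency alone but is supplied verbatim by Lemma \ref{l24}. An alternative avoiding rank--nullity would be to exhibit an explicit basis, for instance the $n-1$ matrices $\Phi(e_2), \ldots, \Phi(e_n)$ where $e_k$ denotes the $k$-th standard basis vector of $\mathbb{K}^n$, and then check directly that they span $\mathcal{C}$ and are linearly independent. I would nonetheless favour the image/kernel computation, since it is shorter and makes the role of Lemma \ref{l24} transparent.
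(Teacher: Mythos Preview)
Your argument is correct. The image and kernel of $\Phi$ are exactly as you describe, Lemma~\ref{l24} supplies the surjectivity onto $\mathcal{C}$, and rank--nullity finishes the count cleanly.

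However, the paper proceeds differently: rather than parametrizing $\mathcal{C}$ by the additive priority vector $\sigma$, it parametrizes by the $n-1$ super\-diagonal entries $a_{k,k+1}$, $k=1,\dots,n-1$, observing that the consistency relations $a_{ij}=a_{i,i+1}+a_{i+1,i+2}+\cdots+a_{j-1,j}$ reconstruct every entry from these. Your approach is more structural and makes the role of Lemma~\ref{l24} explicit; the paper's is more hands-on, and has the practical advantage of pointing directly toward the explicit basis exhibited in Theorem~\ref{basis} (the matrices $B_k$ there are precisely the consistent matrices with a single~$1$ on the super\-diagonal). Both routes are short and either would be acceptable here.
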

\begin{proof}
By applying the consistency condition, all elements of the matrix $A=[a_{ij}]$ can be generated by $n-1$ elements $a_{k,k+1}$ for $k=1,\ldots,n-1,$ i.e. by the second diagonal, that is diagonal directly above the main diagonal (see \cite{KS2015}).
\end{proof}
\begin{thm}[{\cite[Proposition 1]{KO1997}}]\label{basis}
The following set of $n-1$ matrices constitutes a basis of $\mathcal C:$
\begin{equation*}
B_k=[b_{ij}^k],\text{ where }b_{ij}^k=
\begin{cases}
\phantom{-}1,&\text{for }1\leq i\leq k<j\leq n,\\
-1,&\text{for }1\leq j\leq k<i\leq n,\\
\phantom{-}0,&\text{otherwise,}
\end{cases}
\end{equation*}
where $k=1,\ldots n-1.$
\end{thm}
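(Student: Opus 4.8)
The plan is to show three things in turn: that each $B_k$ genuinely lies in $\mathcal{C}$, that the $n-1$ matrices $B_1,\dots,B_{n-1}$ are linearly independent, and that this suffices because $\dim_{\mathbb{K}}\mathcal{C}=n-1$ by Theorem \ref{dimension}. The guiding observation is that $B_k$ is nothing but the additively consistent matrix generated by the step vector $\sigma^{(k)}=(\underbrace{1,\dots,1}_{k},0,\dots,0)$: setting $\sigma_i^{(k)}=1$ for $i\le k$ and $\sigma_i^{(k)}=0$ for $i>k$, one sees that $\sigma_i^{(k)}-\sigma_j^{(k)}$ equals $1$ exactly when $i\le k<j$, equals $-1$ exactly when $j\le k<i$, and is $0$ otherwise, which is precisely the case split defining $b_{ij}^k$.

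First I would establish membership. By the identification above, $b_{ij}^k=\sigma_i^{(k)}-\sigma_j^{(k)}$, and any matrix of this difference form is additively consistent, since for every triad with middle index $q$ we have $(\sigma_i-\sigma_q)+(\sigma_q-\sigma_j)=\sigma_i-\sigma_j$. Thus each $B_k\in\mathcal{C}$ with no further case analysis; alternatively one can verify the consistency condition $b_{iq}^k+b_{qj}^k=b_{ij}^k$ by hand, splitting into cases according to the positions of $i,q,j$ relative to the threshold $k$, but the $\sigma$-representation makes this automatic and also squares with Lemma \ref{l24}.

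Next I would prove linear independence by evaluating a vanishing linear combination on the superdiagonal. The entry $b_{i,i+1}^k$ equals $1$ precisely when $i\le k<i+1$, i.e. when $k=i$, and can never equal $-1$; hence on the superdiagonal $B_k$ carries a single $1$ in position $(i,i+1)=(k,k+1)$ and zeros elsewhere. If $\sum_{k=1}^{n-1}c_kB_k=0$, then reading off the $(i,i+1)$ entry yields $c_i=0$ for each $i=1,\dots,n-1$, so the family is independent.

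Finally, $n-1$ linearly independent vectors in an $(n-1)$-dimensional space form a basis, and $\dim_{\mathbb{K}}\mathcal{C}=n-1$ by Theorem \ref{dimension}. The only place demanding care is the bookkeeping of the index inequalities in the case split defining $b_{ij}^k$; once the identification $B_k=[\sigma_i^{(k)}-\sigma_j^{(k)}]$ is in hand, both membership and the superdiagonal computation are immediate, so I expect no genuine obstacle beyond this routine checking.
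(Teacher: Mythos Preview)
Your argument is correct. The paper itself does not supply a proof of this statement; it is quoted as \cite[Proposition~1]{KO1997} and followed immediately by a remark, so there is no in-paper proof to compare against. Your route---identifying $B_k$ with the difference matrix of the step vector $\sigma^{(k)}$ to get membership in $\mathcal{C}$ via Lemma~\ref{l24}, reading off independence from the superdiagonal (where $B_k$ has its unique nonzero entry at position $(k,k+1)$), and invoking Theorem~\ref{dimension} for the dimension count---is clean and complete. The only remark worth adding is that the superdiagonal argument you use for independence is exactly the observation underlying the proof of Theorem~\ref{dimension} itself (the second diagonal generates $\mathcal{C}$), so the two results are essentially two faces of the same computation.
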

\begin{note}
For the standard inner product (i.e. Frobenius), an example of approximation of a $4\times 4$ inconsistent matrix as a projection onto $\mathcal C$ is given in \cite
{KO1997}.
\end{note}

\subsection{Approximation by a consistent matrix}

\noindent Suppose that we have a $PC$ matrix $A\in M\left(n,\mathbb{K}\right)\backslash \mathcal{C},$ i.e. $A$ is  inconsistent. Our aim is to find a consistent metric projection $A_{\mathcal{C}}$ of $A\ $ onto the set $\ \mathcal{C}={{\mathcal{A}}_n\ \mathrm{or\ }\mathcal{M}}_n$ with respect to norm $\left\|\cdot \right\|$ induced by an inner product $\left\langle \cdot ,\cdot \right\rangle $, i.e. a nonlinear mapping  $A_{\mathcal{C}}\ :M\left(n,\mathbb{K}\right) \ni A\mapsto A_{approx} \in {\mathcal{C}}$ such that the distance of $A$ to  $\mathcal{C}$ 

\noindent 
\[\ dist\left(A,\ \mathcal{C}\right)={\inf}_{B\in \mathcal{C}}\left\|A-B\right\|=\left\|A-A_{approx}\right\|.  \]

\noindent is attained by the matrix $B=A_{approx}.$

\noindent 

In the additive case $\mathcal{C}={\mathcal{A}}_n$ metric projection $A_{{\mathcal{A}}_n}$ coincides with the  orthogonal projection $A_{proj}:A\mapsto A_{approx}$ of $M\left(n,\mathbb{K}\right)$ onto the $\left(n-1\right)$-linear subspace ${\mathcal{A}}_n,$  which is characterized by the well-known orthogonality condition

\noindent 
\[A-A_{approx}\bot \ {\mathcal{A}}_n.                     \]

\noindent This condition enables to compute the orthogonal projection $A_{proj}$ much more effectively than its nonlinear multiplicative counterpart $M_{{\mathcal{M}}_n}:M\mapsto M_{approx}$. Therefore, it was proposed \cite{CW1985,KS2016} to linearize the process of determining metric projections for practical applications. It was achieved by introducing a new concept of linearized consistent approximations to estimate nonlinear metric projections. For the simplicity, in the following  the symbol $M_{approx}$ will be also used to denote these  linearized consistent approximations. It would not lead to misunderstanding, since we shall always restrict our attention to the linearized case, unless otherwise stated. 

\noindent 

\begin{df}\label{aprox} 
Let $M\in M\left(n,\mathbb{K}\right)\backslash {\mathcal{M}}_n$ be a $PC$ inconsistent  matrix. 

\noindent A consistent approximation $M_{proj}:M\mapsto M_{approx}$ of $M$ onto ${\mathcal{M}}_n$ is defined in the following way:

\begin{enumerate}
\item  we construct the matrix $A={\mathrm{log}\ M,}$

\item  we find the orthogonal projection $A_{approx}$ of $A$ onto the $\left(n-1\right)$-dimensional subspace $\ \mathcal{C}={\mathrm{log}\ {\mathcal{M}}_n\ }.$ 

\item  we set   $M_{approx}={\mathrm{exp} \left(A_{approx}\right)\ }.$
\end{enumerate}

\noindent In short, we define $M_{approx}={\mathrm{exp} \left[{({\mathrm{log}\ M)}}_{approx}\right].}$ 
\end{df}

\subsection{Orthogonalization}
In order to simplify calculation in the examples below, we would like to have orthogonal basis for $\mathcal{C}.$ We produce such a basis by the Gram-Schmidt process. Namely, let $V$ be an $n$-dimensional vector space over $\mathbb K$ with an inner product $\il\cdot,\cdot\ir$ and $B_1,\ldots, B_n$ be its basis. We construct an orthogonal basis $E_1,\ldots,E_n$ as follows:
\begin{equation}\label{GS}
\begin{split}
E_1=&B_1,\\
E_2=&B_2-\frac{\il E_1,B_2\ir}{\il E_1,E_1\ir}E_1,\\
E_3=&B_3-\frac{\il E_1,B_3\ir}{\il E_1,E_1\ir}E_1-\frac{\il E_2,B_3\ir}{\il E_2,E_2\ir}E_2,\\
\ldots=&\ldots\\
E_n=&B_n-\sum_{j=1}^{n-1}\frac{\il E_j,B_n\ir}{\il E_j,E_j\ir}E_j.
\end{split}
\end{equation}

\begin{ex}\label{exF}
Consider an inconsistent PC matrix $M$ in the multiplicative variant:
\begin{equation}\label{M}
M=
\begin{bmatrix}
1&e^{2}&e^{7}\\
e^{-2}&1&e^{3}\\
e^{-7}&e^{-3}&1
\end{bmatrix}.
\end{equation}
Its priority vector $v(M)$ obtained by (\ref{GMM}) is
\begin  {equation}\label{v1}
v(M)=
\begin{bmatrix}
e^3\\
e^{\frac{1}{3}}\\
e^{-\frac{10}{3}}
\end{bmatrix}.
\end{equation}
Taking natural logarithms, we switch to the additive PC matrix variant and get the following additive PC matrix:
\begin{equation*}
A=
\begin{bmatrix}
0&2&7\\
-2&0&3\\
-7&-3&0
\end{bmatrix}.
\end{equation*}
We need to find $A_{\textrm{proj}}$, the projection of $A$ onto $\mathcal C.$ By Theorem~\ref{dimension}, we have that
$\dim_\R\mathcal C=2.$ By Theorem~\ref{basis}, we get a basis of the linear space of consistent matrices $\mathcal C:$
\begin{equation*}
B_1=
\begin{bmatrix}
\phantom{-}0&1&1\\
-1&0&0\\
-1&0&0
\end{bmatrix}
\text{ and }
B_2=
\begin{bmatrix}
\phantom{-}0&\phantom{-}0&1\\
\phantom{-}0&\phantom{-}0&1\\
-1&-1&0
\end{bmatrix}.
\end{equation*}
Evidently, $\il B_1,B_2\ir_{\mathrm{F}}=2.$ Therefore, we have to apply Gram-Schmidt process of orthogonalization \eqref{GS}. If $E_1,E_2$ denotes an orthogonal basis of $\mathcal C,$ then
\begin{equation*}
E_1=
\begin{bmatrix}
\phantom{-}0&1&1\\
-1&0&0\\
-1&0&0
\end{bmatrix}
\text{ and }
E_2=\begin{bmatrix}
\phantom{-}0&-\frac{1}{2}&\frac{1}{2}\\
\phantom{-}\frac{1}{2}&\phantom{-}0&1\\
-\frac{1}{2}&-1&0
\end{bmatrix}.
\end{equation*}
Our goal is to find $A_{\mathrm{proj}}=\varepsilon_1E_1+\varepsilon_2E_2,$ that is to find coefficients $\varepsilon_1$ and $\varepsilon_2$ such that
for every $C\in\mathcal C,$  $\il A-A_{\mathrm{proj}},C\ir_{\mathrm{F}}=0$ which is equivalent to solving:
\begin{equation*}
\begin{split}
\il A-\varepsilon_1E_1-\varepsilon_2E_2,E_1\ir_{\mathrm{F}}=&0,\\
\il A-\varepsilon_1E_1-\varepsilon_2E_2,E_2\ir_{\mathrm{F}}=&0.
\end{split}
\end{equation*}
Since $E_1$ and $E_2$ are orthogonal, we get a system of linear equations:
\begin{equation*}
\begin{split}
\il A,E_1\ir_{\mathrm{F}}-\varepsilon_1\il E_1,E_1\ir_{\mathrm{F}}=&0,\\
\il A,E_2\ir_{\mathrm{F}}-\varepsilon_2\il E_2,E_2\ir_{\mathrm{F}}=&0.
\end{split}
\end{equation*}
By computing Frobenius inner products, we get the following equation:
\begin{equation*}
\begin{split}
18-4\varepsilon_1=&0,\\
11-3\varepsilon_2=&0.
\end{split}
\end{equation*}
By solving the above equations for $\varepsilon_1,\varepsilon_2,$ we get $\varepsilon_1=\frac{9}{2}$ and $\varepsilon_2=\frac{11}{3}.$ Thus,
\begin{equation*}
A_{\mathrm{proj}}=A_{\mathrm{approx,F}}=\frac{9}{2}E_1+\frac{11}{3}E_2=
\begin{bmatrix}
\phantom{-}0&\frac{8}{3}&\frac{19}{3}\\
-\frac{8}{3}&\phantom{-}0&\frac{11}{3}\\
-\frac{19}{3}&-\frac{11}{3}&0
\end{bmatrix}.
\end{equation*}
Finally, we get a consistent approximation for $M,$
\begin{equation*}
M_{\mathrm{approx,F}}=
\begin{bmatrix}
1&e^{\frac{8}{3}}&e^{\frac{19}{3}}\\
e^{-\frac{8}{3}}&1&e^{\frac{11}{3}}\\
e^{-\frac{19}{3}}&e^{-\frac{11}{3}}&1
\end{bmatrix}\in\mathcal C.
\end{equation*}
Notice that the priority vector $v(M_{\mathrm{approx,F}})$ coincides with $v(M)$ given by (\ref{v1}).
\end{ex}

\section{Other inner products on $\mathrm{M}(n,\mathbb K)$}
\label{products}
The standard (Frobenius) inner product on the linear space $\mathrm{M}(n,\mathbb K)$ is defined by:
\begin{equation}\label{ilF}
\il A,B\ir_{\mathrm{F}}=\tr(B^*A).
\end{equation}
The above inner product is exactly the Frobenius inner product defined in previous section, and it defines the Frobenius norm in a usual way by:
$$\|A\|_{\mathrm{F}}^2=\il A,A\ir_{\mathrm{F}}=\sum_{i=1}^n\sum_{j=1}^n|a_{ij}|^2.$$
In \cite{scalar1} the following result is mentioned:
\begin{prop}\label{prop}
For every $m\in\N$ and positive semi-definite matrices $X_i,Y_i$, $i=1,\ldots,m,$ the following function:
\begin{equation}\label{general}
\il A,B\ir_*=\tr\left(\sum_{i=1}^mB^*X_iAY_i\right)
\end{equation}
defines an inner product in $\mathrm{M}(n,\mathbb K).$
\end{prop}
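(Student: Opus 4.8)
The plan is to verify directly the three defining axioms of an inner product for the form $\il A,B\ir_*=\tr\bigl(\sum_{i=1}^m B^*X_iAY_i\bigr)$: linearity in the first argument, conjugate symmetry, and positive definiteness. Linearity in $A$ is immediate, since matrix multiplication is bilinear and the trace is linear, so $\il \alpha A_1+\beta A_2,B\ir_*=\alpha\il A_1,B\ir_*+\beta\il A_2,B\ir_*$ for all $\alpha,\beta\in\K$; conjugate-linearity in $B$ will then follow once conjugate symmetry is established.

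For conjugate symmetry I would exploit that each $X_i,Y_i$, being positive semi-definite, is Hermitian, so $X_i^*=X_i$ and $Y_i^*=Y_i$. Starting from $\overline{\il B,A\ir_*}=\tr\bigl((\sum_i A^*X_iBY_i)^*\bigr)$, using the identity $\overline{\tr M}=\tr(M^*)$, I expand the adjoint to obtain $\sum_i Y_i^* B^* X_i^* A=\sum_i Y_i B^* X_i A$, and then apply the cyclic invariance of the trace to move each $Y_i$ to the far right, recovering $\tr(\sum_i B^* X_i A Y_i)=\il A,B\ir_*$.

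The substantive positivity step rests on Hermitian square roots. Since each $X_i\succeq0$ and $Y_i\succeq0$ admits a (PSD, Hermitian) square root, I would write $X_i=X_i^{1/2}X_i^{1/2}$ and $Y_i=Y_i^{1/2}Y_i^{1/2}$ and use cyclicity to rewrite each diagonal term as
$$\tr(A^*X_iAY_i)=\tr\bigl((X_i^{1/2}AY_i^{1/2})^*(X_i^{1/2}AY_i^{1/2})\bigr)=\|X_i^{1/2}AY_i^{1/2}\|_{\mathrm F}^2\ge0.$$
Summing over $i$ gives $\il A,A\ir_*=\sum_{i=1}^m\|X_i^{1/2}AY_i^{1/2}\|_{\mathrm F}^2\ge0$, so the form is at least a positive semi-definite Hermitian form.

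The hard part — and the point where the hypotheses as literally stated are insufficient — is the definiteness clause $\il A,A\ir_*=0\Rightarrow A=0$. The sum-of-squares expression vanishes precisely when $X_i^{1/2}AY_i^{1/2}=0$ for every $i$, and if, for instance, all the $X_i$ are singular (in the extreme, all $X_i=0$) this holds for nonzero $A$. Thus a nondegeneracy assumption is needed to get a genuine inner product; the cleanest is that some index $i$ has both $X_i$ and $Y_i$ positive \emph{definite} (hence invertible), in which case $X_i^{1/2}AY_i^{1/2}=0$ forces $A=0$. I would therefore state this condition explicitly and verify that it is exactly what upgrades the semi-inner product of the previous paragraph to an inner product.
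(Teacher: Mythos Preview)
Your proof is correct as far as it goes, and it is substantially more detailed than the paper's. The paper does not verify the axioms one by one; instead it gives a single identity,
\[
\il A,B\ir_*=\tr\Bigl(\sum_{i=1}^m B^*X_iAY_i\Bigr)=\tr B^*\Bigl(\sum_{i=1}^m X_iAY_i\Bigr)=\Bigl\langle\sum_{i=1}^m X_iAY_i,\,B\Bigr\rangle_{\mathrm F},
\]
and then asserts (citing an external reference) that all inner-product properties follow from this rewriting. In other words, the paper recognizes $\il\cdot,\cdot\ir_*$ as the pullback of the Frobenius inner product through the linear map $\mathcal A(A)=\sum_i X_iAY_i$, whereas you take the square-root route $\tr(A^*X_iAY_i)=\|X_i^{1/2}AY_i^{1/2}\|_{\mathrm F}^2$ to handle positivity directly. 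Both viewpoints are standard; yours has the virtue of making the positivity transparent without appeal to the literature.

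You are also right to flag the definiteness issue: with only \emph{semi}-definite $X_i,Y_i$ the form $\il\cdot,\cdot\ir_*$ need not be definite (take all $X_i=0$, or more subtly all $X_i$ sharing a common kernel vector). The paper's one-line argument inherits exactly the same gap, since $\il A,A\ir_*=\il\mathcal A(A),A\ir_{\mathrm F}=0$ does not force $A=0$ unless $\mathcal A$ is injective. The paper does not address this and simply defers to the cited source; in the concrete examples that follow, the matrices used are in fact positive \emph{definite}, so the issue does not bite there. Your proposed fix---requiring at least one pair $(X_i,Y_i)$ to be positive definite---is a clean sufficient condition and is exactly the sort of hypothesis one needs to make the statement literally true.
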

\begin{proof}
All properties of an inner product follow from the following equation:
\begin{multline*}
\il A,B\ir_*\\
=\tr\left(\sum_{i=1}^mB^*X_iAY_i\right)=\tr B^*\left(\sum_{i=1}^mX_iAY_i\right)=\left\il\sum_{i=1}^mX_iAY_i,B^*\right\ir_{\mathrm{F}}.
\end{multline*}
\end{proof}

\begin{ex}\label{dip}
Consider the following four matrices in the space $\mathrm{M}(3,\R):$
$$X_1=\begin{bmatrix}1&1&2\\1&2&3\\2&3&6\end{bmatrix},\;X_2=\begin{bmatrix}2&1&1\\1&2&1\\1&1&5\end{bmatrix},\;Y_1=\begin{bmatrix}2&3&2\\3&7&3\\2&3&5\end{bmatrix},
Y_2=\begin{bmatrix}5&2&1\\2&5&1\\1&1&1\end{bmatrix}.$$
By applying Sylvester's criterion in \cite{Lang}, it is easy to see that they are positive semi-definite. Evidently, they are symmetric hence Hermitian.
\par
Let
$$\mathcal A(A)=\mathcal A_{\{X_i,Y_i\mid i=1,2\}}(A)
=X_1AY_1+X_2AY_2.$$
Define
$$\il A,B\ir_{*}=\il\mathcal A(A),B\ir_{\mathrm{F}}.$$
By Proposition~\ref{prop}, $\il \cdot,\cdot\ir_{*}$ is an inner product in $\mathrm{M}(n,\R).$
\end{ex}
\begin{ex}\label{approx*}
Consider $3\times 3$ matrices $B_1$, $B_2$ with real entries computed by the formula in Theorem~\ref{basis} (see Example~\ref{exF} for details). Evidently, $\mathcal B=\{B_1,B_2\}$ is a basis for $\mathcal C\subset\mathrm{M}(3,\R).$ By applying Gram-Schmidt process \eqref{GS} with the inner product from Example~\ref{dip} to the basis $\mathcal B$, we get an orthogonal basis $\mathcal E=\{E_1,E_2\}$ for $\mathcal C$ in $\il\cdot,\cdot\ir_{*}.$
\par
\noindent The above transformations imply that $\il E_1,B_2\ir_*=\il\mathcal A(E_1),B_2\ir_{\mathrm{F}}$. 
\noindent Since
$$\mathcal A(E_1)=\begin{bmatrix}
-5&\phantom{-}9&\phantom{-}4\\
-17&-5&-3\\
-35&-13&-6\end{bmatrix},$$
\end{ex}
\noindent we have $$\il E_1,B_2\ir_*=49\text{ and }\il E_1,E_1\ir_*=65.$$
\noindent By equations \eqref{GS}, we get
$$E_1=
\begin{bmatrix}
\phantom{-}0&1&1\\
-1&0&0\\
-1&0&0
\end{bmatrix}
\text{ and }
E_2=
\begin{bmatrix}
\phantom{-}0&-\frac{49}{65}&\frac{16}{65}\\
\phantom{-}\frac{49}{65}&\phantom{-}0&1\\
-\frac{16}{65}&-1&\phantom{-}0
\end{bmatrix}=
\frac{1}{65}\begin{bmatrix}
\phantom{-}0&-49&16\\
\phantom{-}49&\phantom{-}0&65\\
-16&-65&\phantom{-}0
\end{bmatrix}.
$$
\begin{ex}\label{Aproj}
Take the following additive PC matrix:
\begin{equation*}
A=
\begin{bmatrix}
0&2&7\\
-2&0&3\\
-7&-3&0
\end{bmatrix}.
\end{equation*}
This is the PC matrix from Example~\ref{exF}. Next, we compute the orthogonal (with respect to the inner product from Example~\ref{dip}) projection onto the space $\mathcal C.$
For it, we need to solve a system of linear equations for $\varepsilon_1$ and $\varepsilon_2$:
\begin{equation}\label{one}
\begin{split}
\il A,E_1\ir_{*}-\varepsilon_1\il E_1,E_1\ir_{*}=&0,\\
\il A,E_2\ir_{*}-\varepsilon_2\il E_2,E_2\ir_{*}=&0.
\end{split}
\end{equation}
We get
$$\mathcal A(E_2)=-\frac{1}{65}
\begin{bmatrix}
570&1,710&\phantom{-}36\\
226&1,234&-650\\
1,930&5,050&\phantom{-}1,006
\end{bmatrix}.$$
We can also utilize some computation conducted in the previous example and by using the symmetry of the inner product $\il\cdot,\cdot\ir_*,$ the equation \eqref{one} becomes:
\begin{equation*}
\begin{split}
355-65\,\varepsilon_1=&0,\\
\frac{27,390}{65}-\left(\frac{1}{65}\right)^2 473,520\,\varepsilon_2=&0.
\end{split}
\end{equation*}
Consequently, $\varepsilon_1=\frac{355}{65}=\frac{71}{13},$ $\varepsilon_2=\frac{59,345}{15,784}$ therefore, we get:
$$A_{\mathrm{proj,1}}=A_{\mathrm{approx},*1}=\varepsilon_1E_1+\varepsilon_2E_2=\begin{bmatrix}
0&\varepsilon_1-\frac{49}{65}\varepsilon_2&\varepsilon_1+\frac{16}{65}\varepsilon_2\\
-\varepsilon_1+\frac{49}{65}\varepsilon_2&0&\varepsilon_2\\
-\varepsilon_1-\frac{16}{65}\varepsilon_2&-\varepsilon_2&0
\end{bmatrix}.
$$
Finally, we obtain the following multiplicative PC matrix:
$$M_{\mathrm{approx},*1}=\begin{bmatrix}
1&e^{\varepsilon_1-\frac{49}{65}\varepsilon_2}&e^{\varepsilon_1+\frac{16}{65}\varepsilon_2}\\
e^{-\varepsilon_1+\frac{49}{65}\varepsilon_2}&1&e^{\varepsilon_2}\\
e^{-\varepsilon_1-\frac{16}{65}\varepsilon_2}&e^{-\varepsilon_2}&1
\end{bmatrix}.
$$

\end{ex}

\begin{ex}\label{proj2}
Let us repeat the calculations made in Examples \ref{dip}, \ref{approx*} and \ref{Aproj} to provide a consistent approximation of the matrix $M$ set in (\ref{M}) by means of the inner product induced by matrices:
$$X_1=\begin{bmatrix}1&0&0\\0&2&0\\0&0&3\end{bmatrix},\;X_2=\begin{bmatrix}2&0&0\\0&3&0\\0&0&1\end{bmatrix},\;Y_1=\begin{bmatrix}3&0&0\\0&1&0\\0&0&2\end{bmatrix},
Y_2=\begin{bmatrix}1&0&0\\0&3&0\\0&0&2\end{bmatrix}.$$
We obtain 
$$\mathcal A(E_1)=\begin{bmatrix}
\phantom{-}0&7&6\\
-9&0&0\\
-10&0&0\end{bmatrix},$$
\noindent so $$\il E_1,B_2\ir_*=16\text{ and }\il E_1,E_1\ir_*=32.$$
\noindent By equations \eqref{GS}, we get
$$E_1=
\begin{bmatrix}
\phantom{-}0&1&1\\
-1&0&0\\
-1&0&0
\end{bmatrix}
\text{ and }
E_2=
\begin{bmatrix}
\phantom{-}0&-\frac{1}{2}&\frac{1}{2}\\
\phantom{-}\frac{1}{2}&\phantom{-}0&1\\
-\frac{1}{2}&-1&0
\end{bmatrix}=
\frac{1}{2}\begin{bmatrix}
\phantom{-}0&-1&1\\
\phantom{-}1&\phantom{-}0&2\\
-1&-2&0
\end{bmatrix}.
$$
Since
$$\mathcal A(E_2)=-\frac{1}{2}
\begin{bmatrix}
\phantom{-}0&-7&6\\
\phantom{-}9&\phantom{-}0&20\\
-10&-12&0
\end{bmatrix},$$
we calculate the inner products
$$\il A,E_1\ir_*=144,\ \il A,E_2\ir_*=88\text{ and }\il E_2,E_2\ir_*=24.$$
By solving the equations
\begin{equation*}
\begin{split}
144-32\,\varepsilon_1=&0,\\
88-24\varepsilon_2=&0.
\end{split}
\end{equation*}
we get $\varepsilon_1=\frac{9}{2},$ and $\varepsilon_2=\frac{11}{3}$ therefore, 
$$A_{\mathrm{proj,2}}=A_{\mathrm{approx},*2}=\varepsilon_1E_1+\varepsilon_2E_2=\begin{bmatrix}
\phantom{-}0&\phantom{-}\frac{8}{3}&\frac{19}{3}\\
-\frac{8}{3}&\phantom{-}0&\frac{11}{3}\\
-\frac{19}{3}&-\frac{11}{3}&0
\end{bmatrix}.
$$
Finally, 
$$M_{\mathrm{approx},*2}=\begin{bmatrix}
1&e^\frac{8}{3}&e^\frac{19}{3}\\
e^{-\frac{8}{3}}&1&e^\frac{11}{3}\\
e^{-\frac{19}{3}}&e^{-\frac{11}{3}}&1
\end{bmatrix},
$$
and its priority vector calculated with the use of GMM is equal to
\begin{equation*}
v(M_{\mathrm{approx},*2})=
\begin{bmatrix}
e^3\\
e^{\frac{1}{3}}\\
e^{-\frac{10}{3}}
\end{bmatrix}=v(M).
\end{equation*}

\end{ex}

\section{Approximation selection}\label{crit}
It is worthwhile to stress that in the previous examples we got three approximations of the same matrix $M$. An important dilemma has surfaced: how to compare different approximations of a given PC matrix obtained by the use of different inner products?
The answer to this question is: they are incomparable.

\subsection{Inconsistency} The first criterion that we took into consideration was to compare inconsistency indices of the exponential transformations of differences $A-A_{\mathrm{proj}}$. However, this attempt appeared to be incorrect. 

Let us consider the inconsistency index $Kii$ of a pairwise comparison matrix $M$ given by formula:
\begin{equation}\label{KiiG}
Kii(M)=\max_{i<j<k} \left(  1-\min\left\{\frac{m_{ik}}{m_{ij}m_{jk}},
\frac{m_{ij}m_{jk}}{m_{ik}}\right\}\right).
\end{equation}
This indicator satisfies all the desired axioms formulated in \cite{KU17}.

\begin{thm}
Let $A$ and $B$ be additive pairwise comparison matrices such that $B$ is additively consistent. Then
$$Kii\left(\mathrm{exp}(A-B)\right)=Kii\left(\mathrm{exp}(A)\right).$$
\end{thm}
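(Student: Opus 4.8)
The plan is to reduce the multiplicative index $Kii$ to a quantity depending only on additive ``triad deviations'' and then to observe that these deviations are unaffected by subtracting an additively consistent matrix.

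First, for an additive matrix $A=[a_{ij}]$ and a triad $(i,j,k)\in\mathcal T(n)$, I would introduce the deviation
$$d_{ijk}(A)=a_{ik}-a_{ij}-a_{jk},$$
which, by the anti-symmetry $a_{kj}=-a_{jk}$, measures the failure of the additive consistency condition $a_{ik}+a_{kj}=a_{ij}$ on that triad. Writing $M=\exp(A)$ so that $m_{ij}=e^{a_{ij}}$, a direct computation gives
$$\frac{m_{ik}}{m_{ij}m_{jk}}=e^{d_{ijk}(A)}\quad\text{and}\quad\frac{m_{ij}m_{jk}}{m_{ik}}=e^{-d_{ijk}(A)},$$
so the inner minimum in \eqref{KiiG} equals $\min\{e^{d_{ijk}(A)},e^{-d_{ijk}(A)}\}=e^{-|d_{ijk}(A)|}$. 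Hence
$$Kii(\exp(A))=\max_{(i,j,k)\in\mathcal T(n)}\left(1-e^{-|d_{ijk}(A)|}\right),$$
which exhibits $Kii(\exp(A))$ as a function of the deviations $d_{ijk}(A)$ alone.

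Second, I would set $C=A-B$ and show $d_{ijk}(C)=d_{ijk}(A)$ for every triad. Since $d_{ijk}$ is linear in the matrix entries,
$$d_{ijk}(A-B)=d_{ijk}(A)-d_{ijk}(B).$$
Because $B$ is additively consistent, Lemma~\ref{l24} (equivalently the condition $b_{ik}+b_{kj}=b_{ij}$) gives $b_{ik}=b_{ij}+b_{jk}$ on each triad, so $d_{ijk}(B)=0$. Therefore $d_{ijk}(A-B)=d_{ijk}(A)$, and substituting into the formula above makes every term of the maximum for $A-B$ coincide with the corresponding term for $A$, yielding $Kii(\exp(A-B))=Kii(\exp(A))$.

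The computation is essentially routine; the only points needing care are the anti-symmetry bookkeeping, so that the consistency condition $a_{ik}+a_{kj}=a_{ij}$ rewrites correctly as $d_{ijk}(B)=0$, and the identity $\min\{e^{t},e^{-t}\}=e^{-|t|}$ which collapses the two branches of the minimum. No genuinely difficult step is anticipated: the essential content is simply that the triad deviations are invariant under adding a matrix of zero deviation, i.e. a consistent one.
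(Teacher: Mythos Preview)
Your proof is correct and follows essentially the same approach as the paper: both arguments reduce the triad term in $Kii$ for $\exp(A-B)$ to that for $\exp(A)$ by cancelling the $B$-contributions via the consistency relation $b_{ik}=b_{ij}+b_{jk}$. Your version is slightly more polished in that you name the deviation $d_{ijk}$ and use the identity $\min\{e^t,e^{-t}\}=e^{-|t|}$, whereas the paper carries out the cancellation directly inside the $\min$, but the substance is identical.
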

\begin{proof}
Take any $(i,j,k) \in \mathcal{T}(n)$. Since $b_{ij}+b_{jk}=b_{ik}$, we get
\begin{eqnarray*}
1&-&\min\left\{\frac{e^{a_{ik}-b_{ik}}}{e^{a_{ij}-b_{ij}}e^{a_{jk}-b_{jk}}},
\frac{e^{a_{ij}-b_{ij}}e^{a_{jk}-b_{jk}}}{e^{a_{ik}-b_{ik}}}\right\}=\\
1&-&\min\left\{\frac{e^{a_{ik}}e^{b_{ij}+b_{jk}-b_{ik}}}{e^{a_{ij}}e^{a_{jk}}},
\frac{e^{a_{ij}}e^{a_{jk}}}{e^{a_{ik}}e^{b_{ij}+b_{jk}-b_{ik}}}\right\}=\\
1&-&\min\left\{\frac{e^{a_{ik}}}{e^{a_{ij}}e^{a_{jk}}},
\frac{e^{a_{ij}}e^{a_{jk}}}{e^{a_{ik}}}\right\},
\end{eqnarray*}
which completes the proof.
\end{proof}

From the above theorem it follows that if we take two different consistent approximations $B$ and $C$ of an additive matrix $A$ they satisfy $$Kii\left(\mathrm{exp}(A-B)\right)=Kii\left(\mathrm{exp}(A)\right)=Kii\left(\mathrm{exp}(A-C)\right).$$  

\subsection{Priority vectors for different inner products}
The second attempt to judge whether a consistent approximation $A_{approx}$ of a $PC$ matrix $A$ is acceptable could be to compare the priority vectors induced  by $A$ and $A_{approx}$ for any inner product. In \cite{CW1985} it has been proved that the elements of a projection matrix $A_{\mathrm{approx}}$ induced by a Frobenius product are given by the ratios $\frac{w_i}{w_j}$, where vector $w$ is obtained by GMM. As it has been shown in \cite{KS2016} the priority vectors induced  by $A$ and $A_{approx}$ in this case coincide:

\begin{thm}\label{th52}
Let $A$ be a PC matrix and $A_{\mathrm{approx}}=\left[\frac{w_i}{w_j}\right]$, where $w=GM(A)$, i.e.
$$w_k=\sqrt[n]{\prod_{j=1}^n a_{kj}}.$$
Then $GM(A)=GM(A_{\mathrm{approx}})$.
\end{thm}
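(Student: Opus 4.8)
The plan is to compute $GM(A_{\mathrm{approx}})$ directly from the definition in \eqref{GMM} and compare it coordinate-by-coordinate with $w = GM(A)$. Writing $B = A_{\mathrm{approx}} = \left[\frac{w_i}{w_j}\right]$, the $k$-th coordinate of its geometric-mean priority vector is
$$GM(B)_k = \sqrt[n]{\prod_{j=1}^n \frac{w_k}{w_j}} = \sqrt[n]{\frac{w_k^n}{\prod_{j=1}^n w_j}} = \frac{w_k}{\sqrt[n]{\prod_{j=1}^n w_j}}.$$
Thus $GM(A_{\mathrm{approx}})$ is nothing but $w$ rescaled by the reciprocal of the scalar $g := \sqrt[n]{\prod_{j=1}^n w_j}$, the geometric mean of the coordinates of $w$.

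The only remaining point is to show that the normalizing constant satisfies $g = 1$, and this is precisely where I would invoke that $A$ is a PC matrix, hence reciprocal. Expanding the definition of each $w_j$,
$$\prod_{j=1}^n w_j = \prod_{j=1}^n \sqrt[n]{\prod_{k=1}^n a_{jk}} = \sqrt[n]{\prod_{j=1}^n\prod_{k=1}^n a_{jk}}.$$
In the resulting double product over all index pairs, every diagonal entry contributes $a_{jj} = 1$, and every off-diagonal pair contributes $a_{jk}a_{kj} = 1$ by reciprocity; hence $\prod_{j,k} a_{jk} = 1$ and so $g = 1$. Combining the two steps yields $GM(A_{\mathrm{approx}})_k = w_k = GM(A)_k$ for every $k$, which is the claim.

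I do not anticipate any serious obstacle here: the evaluation of $GM(B)$ is a routine manipulation of products, and the vanishing of the normalizing factor $g$ is a one-line consequence of reciprocity. The one subtlety worth flagging is that reciprocity is genuinely needed for the \emph{literal} equality $GM(A) = GM(A_{\mathrm{approx}})$; without it one would only obtain that the two priority vectors are proportional. In the present setting, however, where priority vectors are defined only up to a positive multiplicative constant (cf.\ the remark following Lemma~\ref{l23}), even this weaker conclusion would already justify saying that the two vectors coincide.
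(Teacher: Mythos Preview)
Your proof is correct. The two-step argument---first reducing $GM(A_{\mathrm{approx}})_k$ to $w_k/g$ with $g=\sqrt[n]{\prod_j w_j}$, then using reciprocity of $A$ to show $g=1$---is clean and complete, and your closing remark about proportionality versus literal equality is well observed.

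As for comparison with the paper: the paper does not actually prove Theorem~\ref{th52} in the text; it attributes the result to \cite{KS2016} and states it without argument. What the paper \emph{does} prove is the weighted generalization (Theorems~\ref{th54} and~\ref{th55} and their corollaries), and there the strategy is somewhat different: rather than computing the priority vector of the approximation directly, the paper works in the additive picture and verifies that the proposed $\sigma_i$'s satisfy the normal equations for the weighted least-squares problem. The crucial step in that proof---showing $\sum_j \varrho_j\sigma_j=0$ via anti-symmetry of $\log M$---is precisely the additive analogue of your identity $\prod_j w_j=1$ via reciprocity of $M$. So the core mechanism is the same; your presentation is simply the multiplicative, self-contained version specialised to equal weights, and it has the advantage of bypassing the normal-equation machinery entirely.
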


As the following example shows the priority vectors of a matrix and its consistent approximation may differ if we use other inner products.

\begin{ex}
Consider an inconsistent additive PC matrix $A$ from Example~\ref{exF}:
\begin{equation*}
A=
\begin{bmatrix}
0&2&7\\
-2&0&3\\
-7&-3&0
\end{bmatrix}
\end{equation*}
and its corresponding multiplicative PC matrix $M=\mathrm{exp}(A)$.
Let us take three inner products: Frobenius product and the inner products $\il\cdot,\cdot\ir_{*1}$ and  $\il\cdot,\cdot\ir_{*2}$  from Examples~\ref{dip} and ~\ref{proj2}.
The approximations $A_{\mathrm{approx,F}}$, $A_{\mathrm{approx},*1}$ and $A_{\mathrm{approx},*2}$ are given in Examples ~\ref{exF}, ~\ref{approx*} and ~\ref{proj2}, respectively.

Notice that $$GM(\mathrm{exp}(A))=GM(\mathrm{exp}(A_{\mathrm{approx,F}}))=GM(\mathrm{exp}(A_{\mathrm{approx},*2})),$$
but $GM(\mathrm{exp}(A))$ and $GM(\mathrm{exp}(A_{\mathrm{approx},*1})),$ are linearly independent. This observation, however, is not surprising. The matrix $\mathrm{exp}(A_{\mathrm{approx},*1})$ minimizes the distance from $\mathrm{exp}(A)$ to the set of cosistent PC matrices according to the inner product $<\cdot,\cdot>_{*1}$, but not to the Frobenius inner product.
\end{ex}

\noindent In the following we show that as we change the inner product, we also have to change the formula for a priority vector. It is done by extending  Theorem 5.2 to weighted Frobenius inner products. For this purpose  we recall the most general standard definition of an inner product in $\mathbb{M}\left(n,\mathbb{K}\right):$ 

Let  $G_1,G_2,\dots ,G_N\ $be $N=n^2$ linearly  independent matrices in the space $\mathbb{M}\left(n,\mathbb{K}\right)$. Represent matrices $A,B\ \in \ \mathbb{M}\left(n,\mathbb{K}\right)$ in a unique manner as

$$A{=}\sum^{{N}}_{k=1}{{\alpha }_kG_k};\; {\alpha }_k  \in \mathbb{K},$$
and    
$$B{=}\sum^{{N}}_{{k}{=}{1}}{{\beta }_k}G_k;\;{\beta }_k \in \mathbb{K},$$
 and define the inner product by

\noindent 
\[\left\langle A,B\right\rangle =\sum^N_{i,j=1}{{\gamma }_{ij}{\alpha }_i{\overline{\beta }}_j},\]

\noindent where  $\mathrm{\Gamma }=\left[{\gamma }_{ij}\right]$ is a positive definite $N\times N$-matrix. For example, if we choose the identity matrix $\Gamma=I\ {\mathrm{and}\ \ G}_{\left(i-1\right)n+j}={E_{ij}}/{\sqrt{{\varrho }_{ij}}}$ for a matrix $\mathrm{P}=\left[{\varrho }_{ij}\right]=\left[{\varrho }_i{\varrho }_j\right]$ of $n^2$ positive weights,  then we get weighted Frobenius norm ${\parallel A\parallel }^2_{F,\mathrm{P}}={\left\langle A,A\right\rangle }_{F,\mathrm{P}}$ induced by  the weighted Frobenius inner product

\noindent 
\[{\left\langle A,B\right\rangle }_{F,\mathrm{P}}=\sum^n_{i,j=1}{{\varrho }_{ij}a_{ij}{\overline{b}}_{ij}}.\]

By Lemma \ref{l24} each matrix $\left[b_{ij}\right]\ \in {\mathcal{A}}_n$ satisfies $b_{ij}={\sigma }_i-{\sigma }_j,$  where the additive constant ${\sigma }_1$ is fixed. Hence the squared weighted distance ${dist}_{F,\mathrm{P}}\left(A,{\mathcal{A}}_n\right)$ of an anti-symmetric real matrix $A$ to the space ${\mathcal{A}}_n$ of all additively consistent real matrices is equal to the  minimal value of  the quadratic function

\[f_A\left(\sigma \right)=\sum^n_{i,j=1}{{\varrho }_{ij}{(a_{ij}-{\sigma }_i+{\sigma }_j)}^2},\]

\noindent of variable  $\sigma =\left({\sigma }_1,{\sigma }_{2,}\dots ,{\sigma }_n\right)\in {\mathbb{R}}^n$ with first coordinate ${\sigma }_1$ fixed. This minimal value is attained at the unique solution ${\sigma }_2,{\sigma }_{3,}\dots ,{\sigma }_n$ of the following system of normal equations 

\begin{equation}\label{e1}
\sum^n_{j=1}{{\varrho }_j\left(a_{ij}-{\sigma }_i+{\sigma }_j\right)=0,\ \ }\ i=2,3,\dots ,n,                    
\end{equation}

\noindent with left-hand sides equal to $-\frac{1}{4{\varrho }_i}\frac{\partial f_A\left(\sigma \right)}{\partial {\sigma }_i}.$ 

\noindent 

From now on we consider only real-valued $n\times n$ --matrices and,  unless otherwise stated, always choose  the first coordinate ${\sigma }_1$ of priority vector $\sigma $ equal to $0.$ In view of the following theorem it follows that another reasonable choice for  additive constant ${\sigma }_1$ in (\ref{e1}) would be the weighted arithmetic mean of the first row of matrix $A:$
\[{\sigma }_1=\frac{\sum^n_{j=1}{{\varrho }_ja_{1j}}}{\sum^n_{j=1}{{\varrho }_j}}.\]

\begin{thm}\label{th54}
Let $A=\left[a_{ij}\right]$ be an anti-symmetric real matrix. If $\mathrm{P}=[{\varrho }_i{\varrho }_j]$ is a matrix of positive weights, then the additively consistent orthogonal approximation $A_{approx}=[{\sigma }_i-{\sigma }_j]$ of $A$ onto ${\mathcal{A}}_n$ with respect to weighted Frobenius norm ${\parallel \cdot \parallel }_{F,\mathrm{P}}$ is determined by: 

\noindent 
\begin{equation}\label{e2}
{\sigma }_i=\frac{\sum^n_{j=1}{{\varrho }_ja_{ij}}}{\sum^n_{j=1}{{\varrho }_j}}\ ,\ i=1,2,\dots ,n.                         
\end{equation}
\end{thm}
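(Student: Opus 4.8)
The plan is to exploit the reduction already carried out in the excerpt: the orthogonal approximation $A_{approx}=[\sigma_i-\sigma_j]$ is precisely the consistent matrix minimizing the quadratic function $f_A(\sigma)=\sum_{i,j=1}^n\varrho_{ij}(a_{ij}-\sigma_i+\sigma_j)^2$, and its minimizer is characterized by the normal equations \eqref{e1}, namely $\sum_{j=1}^n\varrho_j(a_{ij}-\sigma_i+\sigma_j)=0$ for $i=2,\dots,n$. Because $A_{approx}$ depends on $\sigma$ only through the differences $\sigma_i-\sigma_j$, the problem carries a one-parameter gauge freedom consisting of adding a common constant to all $\sigma_i$; fixing this freedom (here through the value of $\sigma_1$) pins down a unique $\sigma$. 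Thus the theorem reduces to verifying that the explicit vector \eqref{e2} solves \eqref{e1}, plus a short uniqueness remark.

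The decisive step is a single computation. Writing $S=\sum_{j=1}^n\varrho_j$, the candidate \eqref{e2} reads $\sigma_i=S^{-1}\sum_{j=1}^n\varrho_j a_{ij}$, i.e. $\sum_{j=1}^n\varrho_j a_{ij}=S\sigma_i$. The crucial observation I would make is that the weighted sum of these $\sigma_i$ vanishes: $\sum_{i=1}^n\varrho_i\sigma_i=S^{-1}\sum_{i,j=1}^n\varrho_i\varrho_j a_{ij}=0$, since the coefficients $\varrho_i\varrho_j$ are symmetric in $(i,j)$ while $A$ is anti-symmetric, so the double sum cancels in pairs $(i,j),(j,i)$. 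Substituting \eqref{e2} into the $i$-th normal equation then gives $\sum_j\varrho_j(a_{ij}-\sigma_i+\sigma_j)=\sum_j\varrho_j a_{ij}-\sigma_i\sum_j\varrho_j+\sum_j\varrho_j\sigma_j=S\sigma_i-S\sigma_i+0=0$, so \eqref{e2} satisfies \eqref{e1} for every $i$.

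To finish, I would confirm that this critical point is a genuine minimum and that the answer is unique. The Hessian of $f_A$ is positive semi-definite, with kernel exactly the line of constant vectors $\sigma$ (as all weights are positive, $\sum_{i,j}\varrho_i\varrho_j(\sigma_i-\sigma_j)^2=0$ forces $\sigma$ constant), so $f_A$ is strictly convex transversally to the gauge line and every solution of the normal equations is a global minimizer whose differences $\sigma_i-\sigma_j$ are uniquely determined. Hence $A_{approx}=[\sigma_i-\sigma_j]$ built from \eqref{e2} is the sought approximation. As a byproduct, the identity $\sum_i\varrho_i\sigma_i=0$ shows that the normalization implicitly chosen by \eqref{e2} is exactly the weighted arithmetic mean $\sigma_1=S^{-1}\sum_j\varrho_j a_{1j}$ singled out just before the theorem.

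The one substantive ingredient is the vanishing of $\sum_{i,j}\varrho_i\varrho_j a_{ij}$: this is what lets a single closed-form formula be simultaneously compatible with all $n$ rows, equivalently what renders the $n$ normal equations consistent despite being linearly dependent (their $\varrho_i$-weighted sum is identically zero). Everything else is routine substitution and standard convexity, so I expect no real obstacle beyond this anti-symmetry cancellation.
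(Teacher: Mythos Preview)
Your proof is correct and follows essentially the same route as the paper: both verify that the candidate \eqref{e2} satisfies the normal equations \eqref{e1}, with the key step being the vanishing of $\sum_j \varrho_j\sigma_j$ forced by the anti-symmetry of $A$. Your double-sum cancellation $\sum_{i,j}\varrho_i\varrho_j a_{ij}=0$ is in fact a cleaner way to obtain this than the paper's index manipulation, and your added convexity/uniqueness remark makes explicit what the paper takes for granted by invoking uniqueness of the orthogonal projection.
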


\begin{proof}  Since the orthogonal projection is determined uniquely, it is sufficient  to check that normal equations (\ref{e1}) are satisfied by given in (\ref{e2})  values  of ${\sigma }_i.$ For this purpose denote $\left|\varrho \right|={\varrho }_1+\dots +{\varrho }_n$ and note that  
\[\mathrm{\ }\sum^n_{j=1}{{\varrho }_j\left(a_{ij}-{\sigma }_i\right)=\ \sum^n_{j=1}{{\varrho }_ja_{ij}-{\sigma }_i\left|\varrho \right|=\ \ 0}\ }\] 
for these values of ${\sigma }_i.$ Moreover, by the anti-symmetry of $A$ we have $a_{jk}=-a_{kj}$, and so

\noindent 
\[\sum^n_{j=2}{{\varrho }_j{\sigma }_j=}-\sum^n_{j=2}{\frac{{\varrho }_j}{\left|\varrho \right|}\sum^n_{k=1}{{\varrho }_ka_{kj}}}=-\sum^n_{k=1}{\frac{{\varrho }_k}{\left|\varrho \right|}\left(-{\varrho }_1a_{k1}+\sum^n_{j=1}{{\varrho }_ja_{kj}}\right)}\] 

\[={\varrho }_1{\sigma }_1-\sum^n_{k=1}{{\varrho }_k{\sigma }_k=}-\sum^n_{k=2}{{\varrho }_k{\sigma }_k.}\]

\noindent Thus the last sum is also equal to 0, which completes the proof.           \end{proof}  

\noindent 

\begin{thm}\label{th55}
 Let $M=\left[m_{ij}\right]$ be a $PC$ matrix. If $\mathrm{P}=[{\varrho }_i{\varrho }_j]$ is a matrix of positive weights$,$  then the consistent approximation

\noindent 
\[M_{approx}:={\mathrm{exp} {({\mathrm{log}\ M)\ }}_{approx}\ }=\left[{{\omega }_i}/{{\omega }_j}\right]\epsilon {\mathcal{M}}_n\]

\noindent of $M$ with respect to weighted Frobenius norm ${\parallel \cdot \parallel }_{F,\mathrm{P}}$  is determined uniquely by: 

\begin{equation}\label{e3}                      
{\omega }_i={\left[\prod^n_{j=1}{{\left(m_{ij}\right)}^{{\varrho }_j}}\right]}^{{1}/{\sum^n_{j=1}{{\varrho }_j}}}\ ,\ \ i=1,2,\dots ,n.
\end{equation}
\end{thm}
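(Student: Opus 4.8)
The plan is to reduce this multiplicative statement to the already-established additive Theorem~\ref{th54} through the logarithmic correspondence of Definition~\ref{aprox}, and then to convert the resulting weighted arithmetic mean of logarithms into a weighted geometric mean by exponentiation. The only genuine content lies in Theorem~\ref{th54}; what remains is a verification that its hypotheses hold and a routine algebraic translation.

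First I would set $A=\log M=[\log m_{ij}]$. Since $M$ is a $PC$ matrix it is reciprocal, so $m_{ij}=1/m_{ji}$ gives $a_{ij}=\log m_{ij}=-\log m_{ji}=-a_{ji}$; hence $A$ is anti-symmetric and the hypotheses of Theorem~\ref{th54} are met with the same weight matrix $\mathrm{P}=[\varrho_i\varrho_j]$. By that theorem, the orthogonal projection $A_{approx}=[\sigma_i-\sigma_j]$ of $A$ onto $\mathcal{A}_n$ with respect to $\|\cdot\|_{F,\mathrm{P}}$ is given by
$$\sigma_i=\frac{\sum_{j=1}^n\varrho_j a_{ij}}{\sum_{j=1}^n\varrho_j}=\frac{\sum_{j=1}^n\varrho_j\log m_{ij}}{\sum_{j=1}^n\varrho_j},\qquad i=1,\ldots,n.$$

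Next, by Definition~\ref{aprox} the consistent approximation of $M$ is $M_{approx}=\exp(A_{approx})=[\exp(\sigma_i-\sigma_j)]=[\exp(\sigma_i)/\exp(\sigma_j)]$, so setting $\omega_i=\exp(\sigma_i)$ immediately exhibits $M_{approx}=[\omega_i/\omega_j]\in\mathcal{M}_n$. It then remains to compute, writing $|\varrho|=\sum_{j=1}^n\varrho_j$,
$$\omega_i=\exp\left(\frac{1}{|\varrho|}\sum_{j=1}^n\varrho_j\log m_{ij}\right)=\exp\left(\frac{1}{|\varrho|}\log\prod_{j=1}^n(m_{ij})^{\varrho_j}\right)=\left[\prod_{j=1}^n(m_{ij})^{\varrho_j}\right]^{1/|\varrho|},$$
which is exactly (\ref{e3}). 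Uniqueness is inherited directly from the uniqueness of the orthogonal projection $A_{approx}$ asserted in Theorem~\ref{th54} (equivalently, from the strict convexity of the weighted Frobenius norm), together with the bijectivity of the coordinatewise exponential $\mathcal{A}_n\to\mathcal{M}_n$.

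The step I expect to require the most care is purely formal rather than deep: keeping the weight matrix $\mathrm{P}=[\varrho_i\varrho_j]$ consistent between the additive and multiplicative pictures and confirming that the normal equations (\ref{e1}) underlying Theorem~\ref{th54} indeed involve the single-index weights $\varrho_j$ exactly as they appear in (\ref{e3}). Once that bookkeeping is in place, the conversion of an exponentiated weighted average of logarithms into a weighted geometric mean is immediate, and no further estimation or analytic argument is needed.
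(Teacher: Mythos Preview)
Your proposal is correct and follows essentially the same route as the paper: apply Theorem~\ref{th54} to $A=\log M$ to obtain the $\sigma_i$ as weighted arithmetic means of $\log m_{ij}$, then set $\omega_i=\exp\sigma_i$ and rewrite the exponentiated sum as the weighted geometric mean~(\ref{e3}). The paper's proof is slightly terser (it does not explicitly verify anti-symmetry of $A$ or spell out the uniqueness argument), but there is no difference in substance.
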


\begin{proof}  Apply Theorem \ref{th54} to the anti-symmetric matrix $A=\left[a_{ij}\right]$ with $a_{ij}={\mathrm{log} m_{ij}\ }$ in order to show that the elements of  consistent orthogonal projection ${\left({\mathrm{log} M\ }\right)}_{approx}=\left[{\sigma }_i-{\sigma }_j\right]$ of  ${\mathrm{log} M\ }$ onto  ${\mathcal{A}}_n$ are determined by: 

\noindent 
\[{\sigma }_i=\frac{\sum^n_{j=1}{{\varrho }_j{\mathrm{log} m_{ij}\ }}}{\sum^n_{j=1}{{\varrho }_j}}={\mathrm{log} {\left[\prod^n_{j=1}{{\left(m_{ij}\right)}^{{\varrho }_j}}\right]}^{{1}/{\sum^n_{j=1}{{\varrho }_j}}}\ }\ ,\ \ i=1, 2,\dots ,n.\]

\noindent Hence we get formulae (\ref{e3}) from identity ${\omega }_i={\mathrm{exp}\ {\sigma }_i,\ }$ which is a direct consequence of Definition \ref{aprox}.                                                             
\end{proof}

The direct corollaries of Theorems \ref{th54} and \ref{th55} are the following generalizations of Theorem \ref{th52}, which state that Definition \ref{aprox} is idempotent:

\begin{cor} Let $A=\left[a_{ij}\right]$ be an anti-symmetric matrix. If $\mathrm{P}=[{\varrho }_i{\varrho }_j]$ is a matrix of positive weights$,$  then the additively consistent approximation with respect to weighted Frobenius norm 
${\parallel \cdot \parallel }_{F,\mathrm{P}}$  is idempotent:

\noindent 
\[{\left(A_{approx}\right)}_{approx\ }=A_{approx}.\] 
\end{cor}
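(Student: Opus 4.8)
The plan is to exploit the fact that the map $(\cdot)_{approx}$ is, by construction (Definition~\ref{aprox} together with Theorem~\ref{th54}), the orthogonal projection of an anti-symmetric matrix onto the subspace $\mathcal{A}_n$ of additively consistent matrices, and that an orthogonal projection fixes every point that already lies in the target subspace. First I would observe that $A_{approx}=[\sigma_i-\sigma_j]$ is itself anti-symmetric, since $\sigma_i-\sigma_j=-(\sigma_j-\sigma_i)$, and additively consistent, being of the form $\sigma_i-\sigma_j$; hence $A_{approx}\in\mathcal{A}_n$ and the approximation operation may legitimately be applied to it a second time, so that $(A_{approx})_{approx}$ is well defined.

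Next I would carry out the explicit verification using formula~(\ref{e2}). Writing $B:=A_{approx}=[b_{ij}]$ with $b_{ij}=\sigma_i-\sigma_j$ and applying Theorem~\ref{th54} to $B$, the new weights are
\[
\tau_i=\frac{\sum_{j=1}^n\varrho_j b_{ij}}{\sum_{j=1}^n\varrho_j}
=\frac{\sum_{j=1}^n\varrho_j(\sigma_i-\sigma_j)}{\sum_{j=1}^n\varrho_j}
=\sigma_i-\frac{\sum_{j=1}^n\varrho_j\sigma_j}{\sum_{j=1}^n\varrho_j}.
\]
The key point is that the subtracted term $c:=\bigl(\sum_j\varrho_j\sigma_j\bigr)/\bigl(\sum_j\varrho_j\bigr)$ is a single constant independent of the index $i$, so $\tau_i=\sigma_i-c$. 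Since the entries of a consistent matrix depend only on the \emph{differences} of the weights, this uniform shift cancels: $\tau_i-\tau_j=(\sigma_i-c)-(\sigma_j-c)=\sigma_i-\sigma_j=b_{ij}$, whence $B_{approx}=B$, i.e. $(A_{approx})_{approx}=A_{approx}$.

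There is no genuine obstacle in this argument; its entire content is the cancellation of the additive constant $c$, which reflects the freedom in the choice of $\sigma_1$ already noted before Theorem~\ref{th54}. If a coordinate-free formulation is preferred, the same conclusion follows in one line from the orthogonality characterization $A-A_{approx}\bot\mathcal{A}_n$ and the uniqueness of the orthogonal projection: applying it to $B=A_{approx}\in\mathcal{A}_n$, the element $B$ itself satisfies $B-B=0\bot\mathcal{A}_n$ with $B\in\mathcal{A}_n$, so by uniqueness $B_{approx}=B$. The multiplicative analogue for $M_{approx}$ follows at once by transporting this identity through the group isomorphism $A=\log M$, i.e. from Theorem~\ref{th55}.
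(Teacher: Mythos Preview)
Your proposal is correct. The paper does not spell out a proof of this corollary at all; it simply declares it a ``direct corollary'' of Theorem~\ref{th54}, and your explicit computation via formula~(\ref{e2}) --- showing that the second application of~(\ref{e2}) shifts all $\sigma_i$ by the same constant $c$, which cancels in the differences $\sigma_i-\sigma_j$ --- is precisely the intended verification. Your alternative one-line argument from the idempotence of orthogonal projections is equally valid and arguably cleaner.
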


\begin{cor} Let $M=\left[m_{ij}\right]$ be a $PC$ matrix. If $\mathrm{P}=[{\varrho }_i{\varrho }_j]$ is a matrix of positive weights$,$  then the consistent approximation with respect to weighted Frobenius norm ${\parallel \cdot \parallel }_{F,\mathrm{P}}$  is idempotent:

\noindent 
\[{\left({\mathrm{exp} \left[{({\mathrm{log}\ M)\ }}_{approx}\right]\ }\right)}_{approx.}={\mathrm{exp} \left[{({\mathrm{log}\ M)\ }}_{approx}\right]\ }.\] 
\textbf{}
\end{cor}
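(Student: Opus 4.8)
The plan is to reduce the multiplicative statement to the additive idempotence established in the preceding corollary, using the $\log/\exp$ group isomorphism between $\mathcal{M}_n$ and $\mathcal{A}_n$. Writing $A=\log M$ in the coordinate-wise sense, Definition~\ref{aprox} gives $M_{approx}=\exp[A_{approx}]$, where $A_{approx}=(\log M)_{approx}$ is the orthogonal projection of $A$ onto $\mathcal{A}_n$ with respect to $\|\cdot\|_{F,\mathrm{P}}$. Applying the approximation operator a second time and using that $\log$ and $\exp$ are mutually inverse coordinate-wise, I would write
\[
\left(\exp[A_{approx}]\right)_{approx}=\exp\left[\left(\log\exp[A_{approx}]\right)_{approx}\right]=\exp\left[(A_{approx})_{approx}\right].
\]
By the additive idempotence corollary proved just above, $(A_{approx})_{approx}=A_{approx}$, so the right-hand side equals $\exp[A_{approx}]=M_{approx}$, which is exactly the claimed identity.

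Alternatively, I would verify the statement directly from formula~\eqref{e3} of Theorem~\ref{th55}. Since $M_{approx}=[\omega_i/\omega_j]$, feeding its entries $m'_{ij}=\omega_i/\omega_j$ back into \eqref{e3} produces new weights
\[
\omega'_i=\left[\prod_{j=1}^n\left(\frac{\omega_i}{\omega_j}\right)^{\varrho_j}\right]^{1/\sum_{j=1}^n\varrho_j}=\frac{\omega_i}{c},\qquad c=\left[\prod_{j=1}^n\omega_j^{\varrho_j}\right]^{1/\sum_{j=1}^n\varrho_j}>0,
\]
where the constant $c$ does not depend on $i$. Hence $\omega'_i/\omega'_j=\omega_i/\omega_j$, so the entries of $(M_{approx})_{approx}$ coincide with those of $M_{approx}$, giving the result a second time.

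The only subtlety I anticipate is bookkeeping of the normalizing constant: the second projection returns priority weights that differ from the first ones by the common multiplicative factor $c$ (equivalently, additively, by the shift $\log c$), and one must observe that this factor cancels in every ratio $\omega_i/\omega_j$ and therefore leaves the consistent matrix unchanged. Conceptually there is no real obstacle here, since the claim is just the standard fact that orthogonal projection onto the subspace $\mathcal{A}_n$ fixes every element of $\mathcal{A}_n$, combined with the observation that $M_{approx}\in\mathcal{M}_n$ already; the computation above merely makes this invariance explicit for the weighted Frobenius inner product $\|\cdot\|_{F,\mathrm{P}}$.
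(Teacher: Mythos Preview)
Your proposal is correct and matches the paper's intent: the paper gives no explicit proof of this corollary, merely declaring it (together with the additive version) a ``direct corollary of Theorems~\ref{th54} and~\ref{th55}.'' Both of your arguments make that declaration explicit---the first via the additive idempotence combined with the $\log/\exp$ isomorphism of Definition~\ref{aprox}, the second by plugging $m'_{ij}=\omega_i/\omega_j$ back into formula~\eqref{e3} and observing the common factor cancels in every ratio---so there is nothing to add or correct.
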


\noindent This means that in a weighted Frobenius norm the consistent approximation mapping $M_{proj}:M\mapsto M_{approx}$ from Definition \ref{aprox} is a projection of the set ${PC}_n$ of all $PC$ matrices onto the multiplicative group   ${\mathcal{M}}_n=\left({\mathcal{M}}_n,\cdot \right).$

\subsection{Nonlinear consistent projection in weighted Frobenius norms}

\noindent 

 The squared weighted Frobenius distance ${dist}_{F,\mathrm{P}}\left(M,{\mathcal{M}}_n\right)$ of a $PC$  matrix $M$  to the space ${\mathcal{M}}_n$ of all multiplicatively consistent matrices $\left[{x_i}/{x_j}\right]$  is determined by a point $x=\left(x_1,x_{2,}\dots ,x_n\right)$ with the first coordinate $x_1=1,$  for which  minimal value of  the function

\[g_M\left(x\right)=\sum^n_{i,j=1}{{\varrho }_{ij}{(m_{ij}-\frac{x_i}{x_j})}^2},\ \ x_1=1.\]

\noindent is attained. If $\mathrm{P}=[{\varrho }_{ij}]$ is a symmetric matrix of positive weights$,$  then this minimal value is attained at solution $x_2,x_{3,}\dots ,x_n$ of the following system of nonlinear normal equations 

\noindent 

\begin{equation}        
\frac{1}{x_i}\sum^n_{j=1}{{\varrho }_j\left[\frac{x_j}{x_i}\left(\frac{1}{m_{ij}}-\frac{x_j}{x_i}\right)-\frac{x_i}{x_j}\left(m_{ij}-\frac{x_i}{x_j}\right)\right]=0,\ \ }\ \ i=2,3,\dots ,n,
\end{equation}  

\noindent 

\noindent where the left- hand sides are equal to $-\frac{1}{4}\frac{\partial g_M\left(x\right)}{\partial x_i}.$

\noindent 

 It seems unlikely that one can find an explicit solution of this system. However, it can be solved  by the locally convergent Newton's method. As a starting point, the priority vector  $x=\left(x_1,x_{2,}\dots ,x_n\right)$, given in Theorem \ref{th55}, should be used. Moreover, further improvement could be made by applying recent results on classical discrete orthogonal
 polynomials proposed in \cite{RS}. 
 
 The lack of an explicit solution should not be a huge surprise. Similar situation exists in physics with the three body problem having only numerical solution and a proof that the general case of this problem has no analytical solution. Evidently, the numerical solution is sufficient to conquer the space.

\section{Conclusions}
\label{concl}
The primary goal of this study was to generalize orthogonal projections for computing approximations of inconsistent PC matrices from the Euclidean space to the Hilbert space  of PC matrices endowed in a different inner product. However, a side product of our study seems to be even more important: there is no mathematical reasoning to support any belief that there is only one approximation method of inconsistent PC matrices. It is a matter of an arbitrary choice of the dot product for the orthogonalization projection process. However, there is a practical reason to use a Frobenius inner product (which generates GM solution), which is its computational simplicity.

\section*{Acknowledgments}
The authors would also like to express appreciation to Tiffany Armstrong (Laurentian University, Computer Science), and Grant O. Duncan, Team Lead, Business Intelligence, Integration and Development, Health Sciences North, Sudbury, Ontario, Canada) for the editorial improvements of our text and their creative comments. The research of the third author was supported by the National Science Centre, Poland as a part of the project no. 2017/25/B/HS4/01617, and by the Faculty of Applied Mathematics of AGH UST within the statutory tasks subsidized by the Polish Ministry of Science and Higher Education, grant no. 16.16.420.054.

\end{document}